\definecolor{structure}{rgb}{0.23,0.4,0.7}
\newtheorem{proposition}{Proposition}
\newsavebox{\blocksavebox}
\definecolor{niceblue}{rgb}{0.33,0.5,0.8}
\newcommand{\cc}{\mathbb{C}}
\newcommand{\ii}{\mathbb{I}}
\newcommand{\refsub}[2]{\hyperref[#1]{\ref*{#1}#2}}
\renewcommand{\min}{\mathchoice{\operatorname*{min}}{\operatorname*{min}}{\mathrm{min}}{\mathrm{min}}}
\newcommand{\norm}[2][]{
  \ifthenelse{\equal{#1}{}}
    {\left\| {#2} \right\|}
    {\ifthenelse{\equal{#1}{uinv}}
      {\left\vert\kern-0.25ex\left\vert\kern-0.25ex\left\vert {#2} \right\vert\kern-0.25ex\right\vert\kern-0.25ex\right\vert}
      {\left\| {#2} \right\|_{#1}}
    }
}
\newcommand{\taverage}[2][]{
  \ifthenelse{\equal{#1}{}}
  {\overline{#2}}
  {\overline{#2}^{#1}}
}
\newcommand{\tracedistance}[3][]{
  \ifthenelse{\equal{#2}{}}
  {\ifthenelse{\equal{#3}{}}
    {\mathcal{D}_{#1}}{}
  }{
    \ifthenelse{\equal{#1}{}}
    {\mathchoice{\operatorname{\mathcal{D}}\left(#2,#3\right)}{\operatorname{\mathcal{D}}(#2,#3)}{\operatorname{\mathcal{D}}(#2,#3)}{\operatorname{\mathcal{D}}(#2,#3)}}
    {\mathchoice{\operatorname{\mathcal{D}}_{#1}\left(#2,#3\right)}{\operatorname{\mathcal{D}}_{#1}(#2,#3)}{\operatorname{\mathcal{D}}_{#1}(#2,#3)}{\operatorname{\mathcal{D}}_{#1}(#2,#3)}}
  }
}
\newcommand{\fidelity}[3][]{
  \ifthenelse{\equal{#2}{}}
  {\ifthenelse{\equal{#3}{}}
    {\mathcal{F}_{#1}}{}
  }{
    \ifthenelse{\equal{#1}{}}
    {\mathchoice{\operatorname{\mathcal{F}}\left(#2,#3\right)}{\operatorname{\mathcal{F}}(#2,#3)}{\operatorname{\mathcal{F}}(#2,#3)}{\operatorname{\mathcal{F}}(#2,#3)}}
    {\mathchoice{\operatorname{\mathcal{F}}_{#1}\left(#2,#3\right)}{\operatorname{\mathcal{F}}_{#1}(#2,#3)}{\operatorname{\mathcal{F}}_{#1}(#2,#3)}{\operatorname{\mathcal{F}}_{#1}(#2,#3)}}
  }
}
\newcommand{\Sr}[3][]{
  \ifthenelse{\equal{#1}{}}
    {\operatorname{\mathnormal{S}}(#2\|#3)}
    {\operatorname{\mathnormal{S}}_{#1}(#2\|#3)}
}
\DeclareMathOperator{\1}{\mathbb{I}}
\newcommand{\tr}{\text{tr}}
\definecolor{jens}{rgb}{0.1,0.5,0.1}
\definecolor{martin}{rgb}{0,0,1.0}
\newcommand{\new}[1]{{\color{black} #1}}
\newcommand{\beq}[0]{\begin{equation}}
\newcommand{\eeq}[0]{\end{equation}}
\newcommand{\hide}[1]{}
\begin{document}

\title{\new{Lower} bounds to variational problems with guarantees}

\author{J.\ Eisert}
\address{Dahlem Center for Complex Quantum Systems, Freie Universit{\"a}t Berlin, 14195 Berlin, Germany}
\address{Helmholtz-Zentrum Berlin f{\"u}r Materialien und Energie, 14109 Berlin, Germany}

\begin{abstract}
Variational methods play an important role in the study of quantum \new{many-body problems}, both in the \new{flavor} of classical variational principles based on tensor networks as well as of quantum variational principles in near-term quantum computing. This \new{work} stresses that for translationally invariant lattice Hamiltonians with periodic boundary conditions, one can easily derive efficiently computable lower bounds to ground state energies that can and should be compared with variational principles providing upper bounds. As small technical results, it is shown that (i) the Anderson bound and a (ii) common hierarchy of semi-definite relaxations both provide approximations with performance guarantees that scale like a constant in the energy density for cubic lattices. (iii) Also, the Anderson bound is systematically improved as a hierarchy of semi-definite relaxations inspired by the \new{quantum} marginal problem. 
\end{abstract}

\maketitle

\section{Introduction}

Variational principles are for good reason ubiquitous in the study of quantum \new{many-body} system. 
They allow for achieving insights into the physics of quantum many-body systems in ways that are 
hard to obtain by any other method, in particular in situations when strong correlations are dominant 
or when an instance of the sign-problem occurs so that other work-horses of the numerical study of quantum many-body systems such as density functional theory \cite{DFT3} or quantum Monte
Carlo methods \cite{QuantumMonteCarloMethods} are being challenged in their performance.
 A ground state of a \emph{local 
many-body Hamiltonian} $H_N$ defined on a lattice ${\cal L}$
composed of $N$ sites or vertices each of which is associated with
a $d$-dimensional quantum system is even defined as a state satisfying
\begin{equation}\label{LH}
\rho_G:= \text{argmin}_{\rho\in {\cal S}((\cc^d)^{\otimes N})} {\tr} (\rho H_N).
\end{equation}
This is the solution of a variational principle over all quantum states $\rho$ defined on these
$N$ degrees of freedom. A substantial body of literature on the study
of strongly correlated quantum systems with many degrees of freedom is hence dedicated
to formulating meaningful tractable ansatz classes to tackle this general variational principle.
Naturally, if one merely optimizes over certain families of 
quantum states $\rho\in {\cal T}((\cc^d)^{\otimes N})$,
the latter being a suitable subset of all quantum states, one arrives at quantum 
states providing an 
\emph{upper bound} to the ground state energy density
\begin{equation}
	e_\text{min}(H_N):= \frac{1}{N}
	\min_{\rho\in {\cal S}((\cc^d)^{\otimes N})}
	\text{tr}(\rho H_N)  = \frac{1}{N}\lambda_\text{min}(H_N).
\end{equation}
This idea has presumably become most prominent
in the study of \emph{tensor network states} 
\cite{Orus-AnnPhys-2014,RevModPhys.93.045003,MPSRev}
where classically efficient (in memory storage, 
but also in computational complexity at least in approximation) 
ansatz classes are being
used to generate excellent approximations of true ground states. The basis of their
functioning is that common ground states of local Hamiltonians are much less entangled
than they could be \cite{AreaReview}, allowing for efficient approximations. Indeed, 
ground states of gapped one-dimensional local Hamiltonians can basically be parametrized 
by the solutions of such variational principles, in fact by instances of matrix product states.

An alternative ansatz that is increasingly
becoming popular is that of using quantum circuits in what is called the \emph{variational
quantum eigensolver} \cite{McClean_2016,bharti_2021_noisy}: In this context, one thinks of near-term 
quantum computers that have the ability to prepare states from a parametrized family
of quantum states, for which the expectation value of the given Hamiltonian is then estimated and computed 
from measurement data. A limitation of such ansatzes is the often relatively low expressivity of the
parametrized family of states; at the same time, in contrast to classical approaches
they do not require an efficient classical contraction. \new{What is more, the training of such variational quantum algorithms based on expectation values requires many samples \cite{Gradients}, gradients can be small so that 
training is difficult
\cite{BarrenPlateaus,ReviewBarrenPlateaus}, and in instances, they can also be ``dequantized'' by proposing classical algorithms for the same task. That said, they remain an important application of near-term quantum computers
\cite{TrainabilityDequantization,DoesBarrenSimulability}.}
Both ansatzes deliver upper bounds to
the energy density of the ground state.

Either way, as such, by construction such methods do not provide any certificate of the quality of the 
approximation of the ground state energy. This \new{may be} less of an issue for tensor
network methods that reach enormous precision, \new{but even there certificates are helpful.}
\new{It surely} applies to quantum variational approaches
that are presently within reach. This short 
\new{manuscript}
stresses that lower bounds of the ground state 
energy that provide precisely such certificates can be found, often with little
 programming effort 
\cite{PhysRev.83.1260,Mazziotti,PhysRevLett.108.200404,BaumgratzLowerBounds,PironioLowerBounds}. 
A similar point has also been made for variational quantum
eigensolvers in Ref.~\cite{PhysRevResearch.2.043163}. 
 For simplicity, 
throughout this \new{work}, we consider
\emph{translationally invariant} local Hamiltonians with nearest-neighbour interactions,
defined on cubic lattices ${\cal L}$ of size $|{\cal L}|:= N=n^D$, 
naturally equipped with periodic boundary conditions. This means that the Hamiltonian 
takes the form
\begin{equation}\label{LH}
H_N = \sum_{j\in {\cal L}} \tau_j (h),
\end{equation}
where $\tau_j$ places the nearest-neighbour Hamiltonian term $h$ at a root 
site $j\in {\cal L}$ of a lattice hosting a $d$-dimensional quantum degree of freedom. \new{For conceptual simplicity, we assume $h$ to be supported on $2^D$ sites.}
 
\new{In this work}, it is shown that common such lower bounds
can be seen to feature performance guarantees that scale as $O(1)$ in the system size $N$ 
for the ground state energy density with a small constant that can be tightly bounded.
Also, the Anderson bound is improved.
The arguments presented here are all elementary, but on a conceptual level, it is still worth stressing 
that any variational eigensolver has to deliver
a value that is more accurate than a constant in $N$ in order to deliver a meaningful estimate for the
ground state energy. In this sense, the results stated here can be seen as (immediate instances of) 
``de-quantization'' results,
in that they place  stringent demands on any quantum algorithm aimed at obtaining a quantum advantage
when estimating ground state energies.
For what follows, we define as the central quantity of this \new{work}
\begin{equation} 
e_\text{min} := \limsup_{n\rightarrow \infty} e_\text{min}(H_{n^D})
\end{equation}
as the asymptotic ground state energy density.



\section{A performance guarantee for the Anderson bound}
The Anderson bound  \cite{PhysRev.83.1260}
is a remarkably
simple lower bound to the ground state energy of quantum many-body Hamiltonians,
basically merely exploiting the triangle inequality of the operator norm $\|.\|$ applied to the
Hamiltonian equipped with a negative sign. It is conceptually easy and is 
implementable with a small programming effort, of less than an hour for a one-dimensional 
Hamiltonian problem. The performance of the bound is depicted in Fig.~2 for the 
\emph{Heisenberg Hamiltonian} in one spatial dimension. 
In this section, we see that it actually always delivers an approximation
of the ground state energy density up to a small constant in $N$, in fact, arbitrarily
small, with a computational effort that is exponential in $m$.

\begin{proposition}[Performance guarantee of the Anderson bound] 
Consider a family of translationally invariant Hamiltonians of the 
form (\ref{LH}) on a cubic lattice in some spatial dimension $D$ \new{with periodic boundary conditions}, indexed by the system size $N=n^D$, 
and let $\lambda_\text{min}(h_m)$ be the smallest eigenvalue of a cubic patch $h_m$ of $H_N$ on $m^D$
sites, with open boundary conditions, then
\begin{eqnarray}\label{Ande}
	A(m,D)&:= &\frac{\lambda_\text{min}(h_m)}{(m-1)^D}\leq e_\text{min}, \\
	\left | e_\text{min} -
	A(m,D)
	\right| &\leq & 
	\frac{D}{m}\|h\| 
	- \lambda_\text{min}(h_m)
	\biggl[
	\frac{1}{(m-1)^D}
	-
	\frac{1}{m^D} 
	\biggr].\nonumber\\
\end{eqnarray}
\end{proposition}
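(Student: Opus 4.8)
The plan is to establish the two displayed bounds in turn: the first inequality is the Anderson bound itself and follows from a ``ground-state energy of the whole $\ge$ sum of the parts'' covering argument, while the second is obtained by pairing that lower bound with a variational upper bound built by tiling the lattice with patch ground states.

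\emph{Lower bound $A(m,D)\le e_\text{min}$.} Write $H_N=\sum_{j\in{\cal L}}\tau_j(h)$. For each $p\in{\cal L}$ let $h_m^{(p)}$ be the $m^D$-site patch with open boundary conditions translated so that its corner sits at $p$, that is, the sum of exactly those terms $\tau_j(h)$ whose support lies inside the cube $\{p,\dots,p+m-1\}^D$ (once $m\le n$, reading indices mod $n$ does no harm). Since $h$ is nearest-neighbour, so that its support has diameter one, a term $\tau_j(h)$ is contained in the patch at $p$ precisely when $p$ ranges over a cube of side $m-1$; hence each term of $H_N$ appears in exactly $(m-1)^D$ of the $N$ patches $\{h_m^{(p)}\}_{p\in{\cal L}}$, so $\sum_{p}h_m^{(p)}=(m-1)^D H_N$. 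Using superadditivity of the smallest eigenvalue, $\lambda_\text{min}(\sum_p A_p)\ge\sum_p\lambda_\text{min}(A_p)$, together with $\lambda_\text{min}(h_m^{(p)})=\lambda_\text{min}(h_m)$ (each patch is unitarily equivalent to $h_m$), one gets $(m-1)^D\lambda_\text{min}(H_N)\ge N\lambda_\text{min}(h_m)$, i.e.\ $e_\text{min}(H_N)=\lambda_\text{min}(H_N)/N\ge\lambda_\text{min}(h_m)/(m-1)^D=A(m,D)$. Since this holds for every $n$, it persists in the limit $n\to\infty$.

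\emph{Upper bound on $e_\text{min}$.} Let $\psi_m$ be a ground state of $h_m$ and, taking $m\mid n$ (the general case costing only a surface correction of order $n^{D-1}$, negligible in the density), tile ${\cal L}$ into $N/m^D$ disjoint $m$-cubes, placing $\psi_m$ on each; let $\Psi$ be the resulting product state. In $\langle\Psi|H_N|\Psi\rangle=\sum_j\langle\Psi|\tau_j(h)|\Psi\rangle$, a term supported inside a single cube contributes its patch expectation, and summing these over the $(m-1)^D$ ``interior'' roots of a cube reproduces $\langle\psi_m|h_m|\psi_m\rangle=\lambda_\text{min}(h_m)$ per cube; every remaining term straddles a cube boundary and contributes at most $\|h\|$ in modulus. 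The number of straddling terms is $(1-(1-1/m)^D)N\le (D/m)N$, using $1-(1-x)^D\le Dx$ on $[0,1]$. Hence $\lambda_\text{min}(H_N)\le\langle\Psi|H_N|\Psi\rangle\le \lambda_\text{min}(h_m)N/m^D+(D/m)N\|h\|$, i.e.\ $e_\text{min}\le\lambda_\text{min}(h_m)/m^D+(D/m)\|h\|$.

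Combining the two bounds and using $A(m,D)\le e_\text{min}$, so that $|e_\text{min}-A(m,D)|=e_\text{min}-A(m,D)$, yields $|e_\text{min}-A(m,D)|\le [\lambda_\text{min}(h_m)/m^D+(D/m)\|h\|]-\lambda_\text{min}(h_m)/(m-1)^D$, which is exactly the asserted estimate. I expect the only delicate points to be the combinatorial bookkeeping --- identifying precisely which translates $\tau_j(h)$ sit inside which patch, which is what produces the clean constant $(m-1)^D$, and counting the straddling terms --- together with checking that the passage $n\to\infty$ and the imperfect-tiling correction are genuinely harmless; none of these is a serious obstacle.
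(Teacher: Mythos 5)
Your proof is correct and follows essentially the same route as the paper: the Anderson patch decomposition for the lower bound, and a product of patch ground states over a disjoint $m$-tiling together with a count of the at most $(D/m)N$ straddling terms for the upper bound. The only (cosmetic) difference is that for the lower bound you average over all $N$ translates of the patch and invoke superadditivity of $\lambda_\text{min}$, rather than the paper's exact partition into $J^D$ patches on a lattice of side $(m-1)J$; this has the minor benefit of working for every $n\ge m$ without divisibility assumptions.
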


\begin{proof} The first inequality, first stated in Ref.~\cite{PhysRev.83.1260}
and here adapted to the asymptotic limit of large cubic lattices,
is an immediate consequence of the following basic and still profound insight: One composes a Hamiltonian of
$N = [(m-1) J]^D $ sites into 
\new{$J^D$} overlapping parts (see Fig.~1(a))
\begin{equation}
	H_N = \sum_{s\in I_{m,J}} 
\tau_s(h_m)
\end{equation}
where 
$I_{m,J}:= \{ ((m-1)( j_1-1)+1,\dots, (m-1) (j_D-1)+1): j_1,\dots, j_D=1,\dots, J\}\subset{\cal L}$. 
Then, clearly,
\begin{equation}
	\lambda_\text{min} (H_N) \geq   
    \sum_{s\in I_{m,J}}  \lambda_\text{min} (h_m)  
    = J^D \lambda_\text{min} (h_m) ,
\end{equation}
where the first bound follows from the fact that the smallest eigenvalue of $H_N$ is lower bounded
by the sum of the smallest eigenvalues of each of the $J^D$ parts consisting of
$\new{m^D}$
sites each. For $J\rightarrow\infty$. 
this gives the first statement of Eq.~(\ref{Ande}).
The performance guarantee
can be shown by considering a different partition (see Fig.~1(b)),
\begin{equation}\label{parti}
	H_N = \sum_{s\in K_{m,J}} \tau_s(h_m)+ V_N,
\end{equation}
where $K_{m,J}:= \{ (m (j_1-1) +1,\dots,  m (j_D-1)+1): j_1,\dots, j_D=1,\dots, J\}\subset {\cal L}$, 
and hence 
\new{$n=Jm$, $N=n^D$,}
and where $V_N$ is the remainder
term that consists of nearest neighbour terms connecting the slightly larger patches. 
Let us define $|\phi \rangle := \text{argmin}_{|  \psi\rangle} {\langle\psi | h_m|  \psi\rangle}$.
Then
\begin{eqnarray}
	&&\frac{\lambda_\text{min}(H_N)}{N} - 
	\frac{\lambda_\text{min}(h_m) }{(m-1)^{D}} =
	\min_{|  \psi\rangle} \frac{\langle\psi|H_N |\psi\rangle}{N} 
	-\frac{\langle \phi | h_m |\phi\rangle} {(m-1)^D} \nonumber
	\\
	& \leq & \langle \phi | h_m |\phi\rangle
	\frac{J^D}{N} +
	 \frac{1}{N} \|V_N\| -  \langle \phi | h_m |\phi\rangle
	 \frac{1}{(m-1)^D}\nonumber
	 \\
	& \leq & \langle \phi | h_m |\phi\rangle
	\frac{J^D}{N} +
	 m^{D-1}\|h\|\frac{D J^D}{N} -  \langle \phi | h_m |\phi\rangle
	 \frac{1}{(m-1)^D}\nonumber
	 \\
	&= &
	\lambda_\text{min} (h_m)
	\biggl[
	\frac{1}{m^D}
	-
	\frac{1}{(m-1)^D} 
	\biggr]+ \frac{D }{m }\|h\|,
\end{eqnarray}
where the first inequality follows from the fact that the minimum of $\min(\rho h_m)$ over quantum
states $\rho$ takes the smallest
value for $\langle \phi| h_m|\phi\rangle$, the second from the 
triangle inequality of the operator norm. Then, one encounters \new{fewer than} $D m^{D-1}$ boundary terms in a cubic patch in $D$ dimensions involving $m^D$ many vertices (\new{avoiding double counting}); and again, 
the
triangle inequality of the operator norm is used.
As before, the statement follows for $J\rightarrow\infty$.
\end{proof}%

\begin{figure}[t]
        \includegraphics[width = .75\columnwidth]{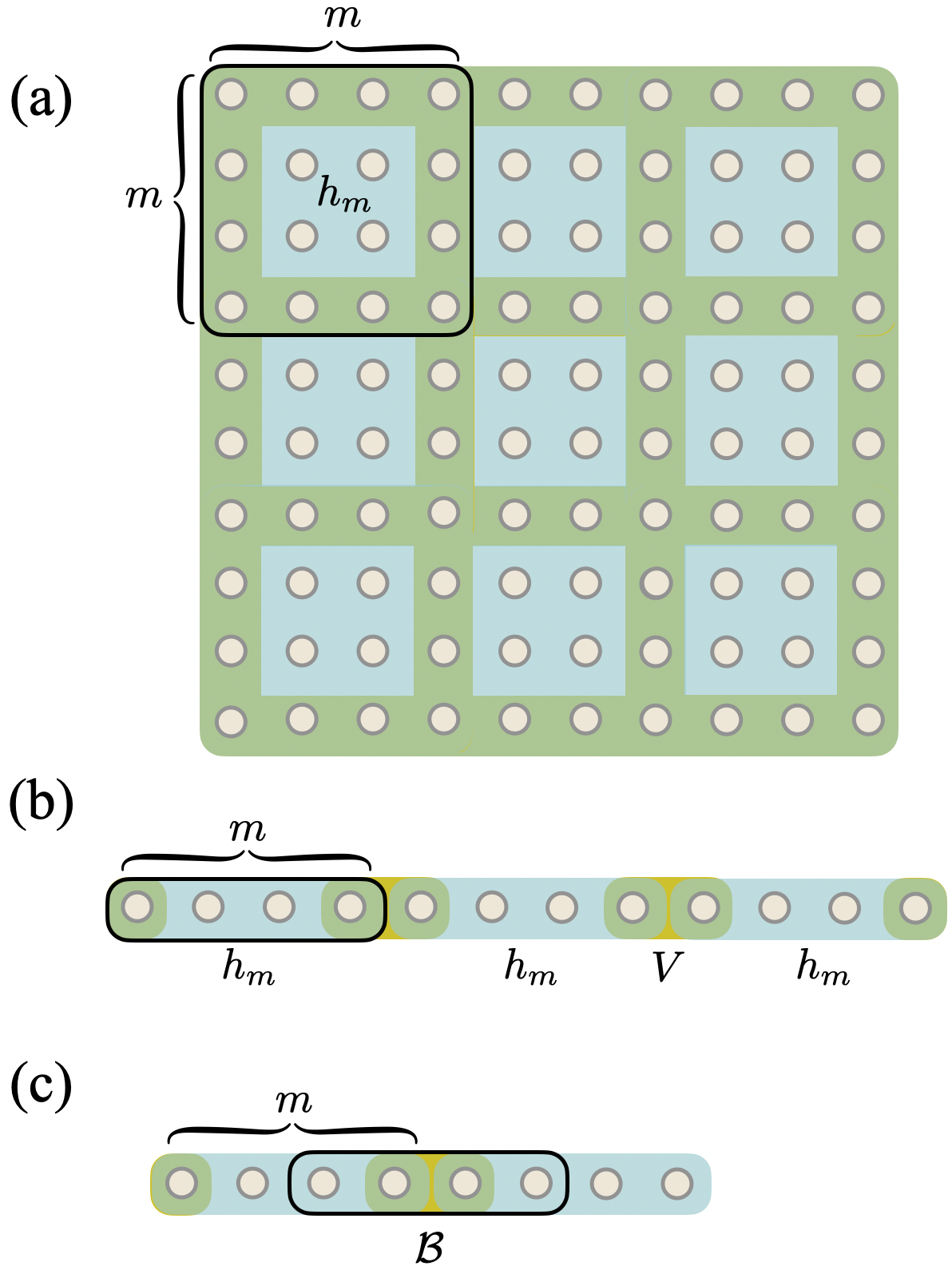}
        \caption{(a) The configuration in the original Anderson bound applied to a two-dimensional
        translationally invariant lattice system. \new{For the periodic boundary conditions, the last row and column are identified with hte first one.} (b) The configuration in one spatial dimension used to show the
        performance guarantee of the Anderson bound. (c) The configuration employed for the improved
        Anderson bound based on semi-definite programming and the marginal problem, 
        again applied to one spatial dimension.}
\end{figure}

\section{Performance guarantees for semi-definite relaxations}

Similar 
performance guarantees can be shown for common hierarchies of semi-definite relaxations
\cite{Boyd2004} of finding ground states of local Hamiltonians \cite{Mazziotti,PhysRevLett.108.200404,BaumgratzLowerBounds,PironioLowerBounds},
\new{giving rise to in practice commonly much tighter bounds than those provided by Anderson-type bounds \cite{PhysRevLett.108.200404}.}
The core idea of these approaches is very simple: The constraint 
of quantum states being positive semi-definite
$\rho\geq0$ is relaxed to operators $\omega$ satisfying 
$\text{tr}(\omega O^\dagger O)\geq 0$
for suitable operators $O$. In a next step, the quantum state $\omega$ is eliminated. 
Since the constraint of quantum states being positive semi-definite
is relaxed, one naturally arrives at lower bounds to ground state energies.

\begin{figure}[t]
        \includegraphics[width = 0.78\columnwidth]{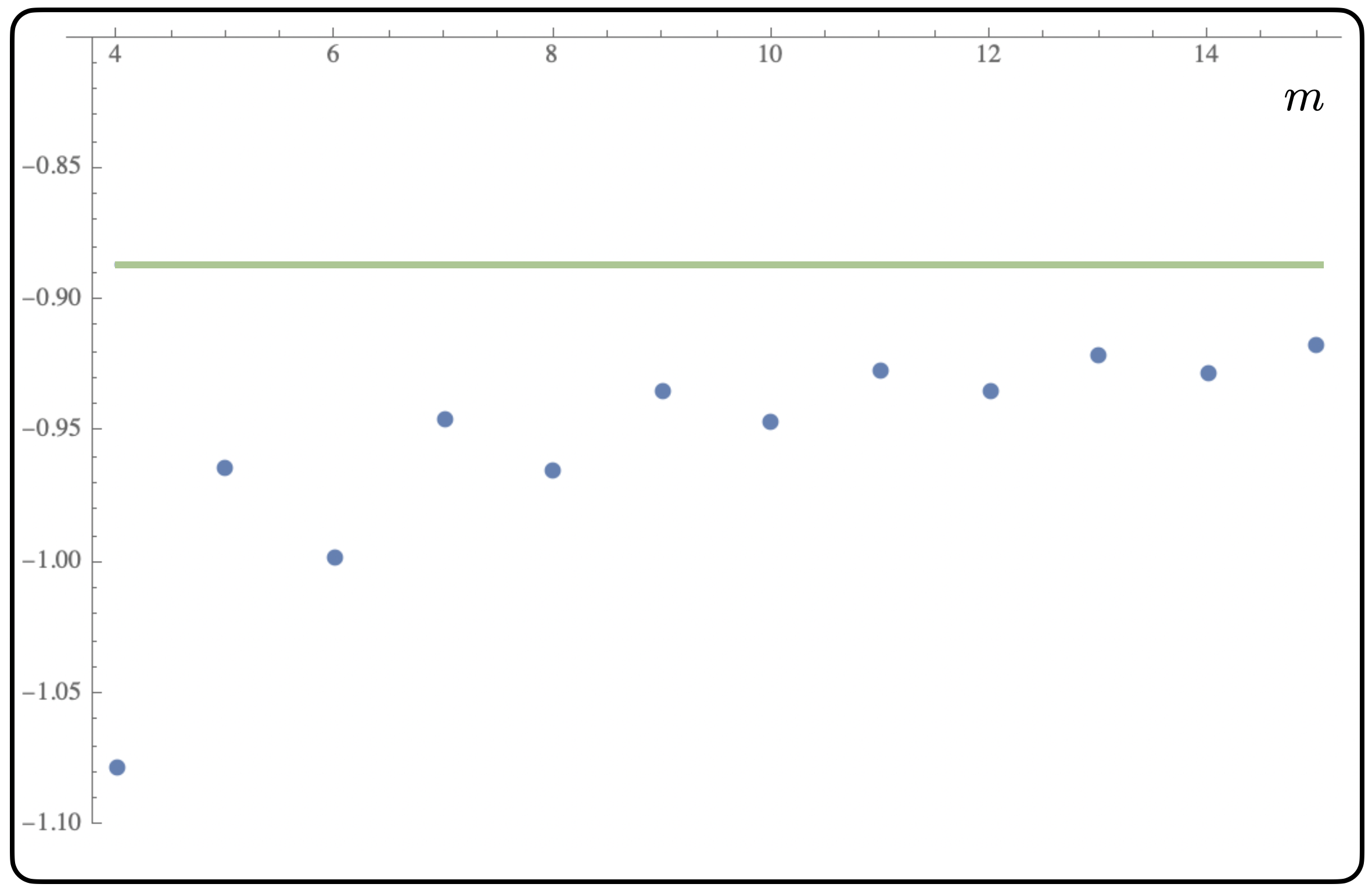}
        \caption{The Anderson bound for the one-dimensional Heisenberg model 
        $H = \frac{1}{2} \sum_{j}\tau_j (X\otimes X+ Y\otimes Y+ Z\otimes Z)$ as a function of the patch size $m$
        until $m=15$,
        featuring a noteworthy even-odd effect. The straight line represents the exact
        ground state energy density, $e_\text{min}=-2\log(2) +1/2$.}
     \end{figure}

Concretely, for a distinguished lattice site $j$, 
consider a set ${\cal M}$ of cardinality $|{\cal M}|=:C$ 
of operators $O_a\in {\cal M}$ 
that 
has the property that the Hamiltonian term on this site can be written as 
\begin{equation}
\tau_j(h)= \sum_{a,b} c_{a,b} O_a^\dagger O_b
\end{equation}
with $c_{a,b}\in \cc$ for $a,b=1,\dots, C$. The set of sites on which all operators in ${\cal M}$ are non-trivially
supported 
is denoted as ${\cal S}$, which at the same time contains by construction the support of $\tau_j(h)$
(but which may be substantially larger).
This set is being seen as the root set of operators that then 
acts in a translationally invariant fashion on each lattice site in the same fashion.
The methods discussed in Refs.~\cite{Mazziotti,PhysRevLett.108.200404,BaumgratzLowerBounds,PironioLowerBounds}
essentially amount to identifying suitable such sets of operators.
The operators considered in ${\cal M}$ will feature algebraic relations (such as commutation or
anti-commutation relations).
%
%
%
We consider again translationally invariant settings,
with 
\begin{equation}
{\cal O}_N:= \{ \tau_j (O_a): \forall j\in {\cal L}, a=1,\dots,  C\} 
\end{equation} 
being defined as  translates
of the root set of operators ${\cal M}$, giving rise to a set of cardinality $|{\cal O}_N|=: D_N$.
To express the lower bound, we define the matrix $X$ with
\begin{equation}
X_{a,b}:= \text{tr}(O_a^\dagger O_b \omega)
\end{equation}
for $a,b=1,\dots ,D_N$.
The matrices $X\in\cc^{D_N\times D_N}$ will again be taken to be translationally invariant, in that 
 \begin{equation}\label{TIC}
X_{a,b}= \text{tr}(\tau_j(O_a^\dagger) \tau_j(O_b )\omega)
 \end{equation}
 for all $j\in {\cal L}$ and all $a,b=1,\dots, D_N$. 
 The above constraints between the operators will be reflected by
expressions of the form
 	$\text{tr} (X R) = 1$
for suitable matrices $R\in\cc^{D_N\times D_N}$.
%
 The constraints 
 immediately also act in a 
 translationally invariant fashion: We denote the set of matrices $R\in \cc^{D_N\times D_N}$
 that reflect both the local algebraic constraints 
 as well as 
 the translational invariance in Eq.\ (\ref{TIC}) as $\text{tr} (X R) = 1$ by ${\cal X}_N$.
For all $X\in \cc^{D_N\times D_N}$ that satisfy $X\geq 0$ and $\text{tr} (X R) = 1$ for all  $R\in{\cal X}_N$,
\begin{equation}\label{B}
	\sum_{a,b}\alpha_a^\ast \alpha_b
	\text{tr}(O_a^\dagger O_b \rho)
	=\sum_{a,b} \alpha_a^\ast \alpha_b
	X_{a,b}
	\geq 0
\end{equation}  
will hold true for all $\alpha\in {\mathbb{C}}^{D_N}$, allowing to 
devise a lower bound to the 
ground state energy density.  In fact, these bounds will again
be lower bounds with a guaranteed constant error.

\begin{proposition}[Performance guarantee of semi-definite bounds] 
The solution $x_N$ of the semi-definite problem
\begin{eqnarray}
	\text{minimize} & \text{tr} (h  X),\label{l1}\\
{\text{subject to }} &X\geq 0,\label{l2}\\
&\text{tr}(X R) =1 \, \forall R\in {\cal X}_N,\label{l3}
\end{eqnarray}  
where $ {\cal X}_N$ reflects algebraic constraints as well as translational invariance,
satisfies $x_N\leq e_\text{min}(H_N)\leq x_N+ O(1)$. 
\end{proposition}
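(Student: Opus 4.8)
The plan is to prove the two inequalities separately: $x_N\le e_\text{min}(H_N)$ is the routine ``relaxation'' statement, while $e_\text{min}(H_N)\le x_N+O(1)$ is the substantive claim. For the first I would simply exhibit a feasible point of the program (\ref{l1})--(\ref{l3}) whose objective value equals $e_\text{min}(H_N)$. Choose a translationally invariant ground state $\rho_G$ of $H_N$ -- one exists because $H_N$ commutes with all lattice translations, so a common eigenvector can be taken, or one averages the ground-space projector over $\mathcal L$ -- and set $X_{a,b}:=\text{tr}(\tau_j(O_a^\dagger)\tau_j(O_b)\,\rho_G)$. This $X$ is positive semi-definite, since $\sum_{a,b}\alpha_a^\ast\alpha_b X_{a,b}=\text{tr}\bigl((\sum_a\alpha_a\tau_j(O_a))^\dagger(\sum_b\alpha_b\tau_j(O_b))\,\rho_G\bigr)\ge0$ for every $\alpha$; it satisfies $\text{tr}(XR)=1$ for all $R\in\mathcal X_N$, since those constraints only encode operator identities among the $O_a$ together with $\text{tr}\rho_G=1$, all valid for any state; and it is translationally invariant in the sense of (\ref{TIC}) because $\rho_G$ is. Its objective value is $\tfrac1N\text{tr}(H_N\rho_G)=e_\text{min}(H_N)$, so the minimum $x_N$ is at most this.

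For the second inequality I would bound $x_N$ from below by a quantity that is independent of $N$, since together with $e_\text{min}(H_N)\le\|h\|$ -- immediate from the maximally mixed state -- this already gives an $O(1)$ gap. The cheapest such bound uses that in every reasonable instantiation the algebraic constraints fix $\text{diag}(X)$ (e.g.\ $X_{a,a}=1$ when $O_a^\dagger O_a=\one$, as for Pauli or Majorana strings): then the feasible set is compact and $|\text{tr}(hX)|=|\sum_{a,b}c_{a,b}X_{a,b}|\le\sum_{a,b}|c_{a,b}|$, an $N$-independent bound on $|x_N|$. To see in addition that the constant shrinks as the operator set is enlarged -- so that the hierarchy converges -- I would reconstruct a genuine patch state: if $\mathcal M$ is rich enough on its support $\mathcal S$, a cubic patch of linear size $m$, so that the products $\{O_a^\dagger O_b\}$ span the operator algebra on $\mathcal S$, then positivity of the sub-block of $X$ indexed by $\mathcal S$-supported operators together with the normalisation in $\mathcal X_N$ forces this block to be the moment matrix of a bona fide density operator $\sigma\ge0$, $\text{tr}\sigma=1$, on $\mathcal S$ (positivity says $O\mapsto\text{tr}(O\sigma)$ is nonnegative on the cone generated by the $O_a^\dagger O_b$, hence on all positive semi-definite patch operators, i.e.\ $\sigma\ge0$). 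By translational invariance of $X$ one has $\text{tr}(\tau_i(h)\sigma)=\text{tr}(hX)$ for every bond $i$ in the bulk of $\mathcal S$; summing over those bulk bonds, which assemble into a patch Hamiltonian on a sub-patch of linear size $m-O(1)$, and using $h_{m}\ge\lambda_\text{min}(h_{m})\one$, yields $\text{tr}(hX)\ge\lambda_\text{min}(\text{patch})/(\#\,\text{bonds})\ge e_\text{min}-O(1/m)$, the last step being exactly the boundary-term estimate that proves the performance guarantee of the Anderson bound (\ref{Ande}). Hence $x_N\ge e_\text{min}-O(1/m)$, so $e_\text{min}(H_N)-x_N\le(e_\text{min}(H_N)-e_\text{min})+O(1/m)=O(1)$ uniformly in $N$, a bound that collapses to $O(1/m)$ in the thermodynamic limit, exhibiting convergence of the hierarchy.

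The main obstacle is the reconstruction step: upgrading the abstract feasibility of $X$ -- a positive semi-definite moment matrix obeying algebraic and translational constraints, but a priori not the moments of any state -- to the existence of an honest density operator on a bounded patch. This is where one must isolate the precise structural property of the chosen $\mathcal M$ that is being used (essentially that $\{O_a^\dagger O_b\}$ generates a cone dense in the positive part of the patch algebra), and where the bookkeeping of the boundary bonds of $\mathcal S$ against the bond count appearing in $A(m,D)$, as well as the comparison of $e_\text{min}(H_N)$ with $e_\text{min}$, has to be carried out; all of this is routine once that property is pinned down, but the value of the constant inherited from Proposition~1 hinges on it. If $\mathcal M$ is not rich enough for the reconstruction, only the crude $N$-independent bound of the second paragraph survives, which still gives $O(1)$ but with no decay in $m$.
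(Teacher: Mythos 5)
Your proposal is correct and follows the same overall strategy as the paper: the first inequality is the standard relaxation argument (the paper simply notes that the energy minimization is relaxed to the SDP; your explicit feasible point built from a translation-averaged ground state is the same fact made concrete), and the second inequality is obtained in both cases by bounding $x_N$ from below by an $N$-independent constant and using that $e_\text{min}(H_N)$ is itself bounded. Where you differ is in how that $N$-independent lower bound is produced. The paper further relaxes $X\geq 0$ to positivity of the principal submatrix $\left.X\right|_{\cal T}$ for a fixed finite site set ${\cal T}\supseteq {\cal S}$; since the surviving constraints encode only the local algebra and translational invariance, the relaxed problem no longer depends on $N$ and its optimum is an $N$-independent constant lower-bounding $x_N$. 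You instead bound the objective directly via $|X_{a,b}|\leq\sqrt{X_{a,a}X_{b,b}}$, which requires the algebraic constraints to fix the diagonal of $X$. These are essentially the same observation; the paper's version also tacitly needs the ${\cal T}$-restricted problem to be bounded below, which again comes down to the constraints pinning the relevant diagonal entries, so your explicitness about this hypothesis is a virtue rather than a gap. Your final reconstruction step --- upgrading the moment matrix to a bona fide patch density operator so as to make the constant decay with the patch size and connect to the Anderson-bound estimate --- goes beyond both the proposition and the paper's proof (the paper only asserts, without argument, that enlarging ${\cal M}$ approximates the ground state energy arbitrarily well); you correctly identify the structural hypothesis it needs (that the products $O_a^\dagger O_b$ generate the positive cone of the patch algebra), and none of it is required for the $O(1)$ statement itself.
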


\begin{proof} Since (\ref{B}) is true for all $\alpha\in \cc^{D_N}$ exactly if
$X\geq 0$, and since
\begin{equation}
\sum_{a,b} h_{a,b} 
	\text{tr} (O_a O_b \omega) = 
	\sum_{a,b} h_{a,b} 
	X_{a,b} = 
	\text{tr} (h  X),
\end{equation}  
we get the above lower bounds as the solution of
the convex optimization problem Eqs.\ (\ref{l1})-(\ref{l3}), as the energy 
minimization problem is relaxed
to a semi-definite problem. 
We will continue the argument by showing that these
bounds will scale like a constant in the energy density. 
For this, we consider a set of fixed lattice sites ${\cal T}$ independent of $N$ that is a superset
of the set ${\cal S}$ that hosts $\tau_j(h)$ but which may be substantially larger than its support,
 ${\cal S}\subset {\cal T}$.
The strategy will be to 
construct bounds that are even lower bounding the ones from Proposition 2, but that already give rise to
a constant energy approximation from below. 
These lower bounds are given by the solution of semi-definite problems with
(\ref{l1}) and (\ref{l3}), where (\ref{l2}) is relaxed to 
the principal sub-matrix associated with ${\cal T}$ satisfying
\begin{equation}
	\left.X\right|_{\cal T}\geq 0.
\end{equation}
This will give rise to lower bounds of the original semi-definite optimization problem, 
as $X\geq 0$ implies that the principal sub-matrix
$\left.X\right|_{\cal T}$ 
\new{of the matrix $X$} is also positive semi-definite. Since no constraint in the problem involves the system size
any longer,
and the only constraints dependent on $N$ in  ${\cal X}_N$ enforce translational invariance,
the solution of this new semi-definite problem 
\begin{eqnarray}
	\text{minimize} & \text{tr} (h  X),\label{c1}\\
{\text{subject to }} &\left.X\right|_{\cal T}\geq 0,\label{c2}\\
&\text{tr}(X R) =1 \, \forall R\in {\cal X}_N\label{c3}
\end{eqnarray}  
scales like $O(1)$
in $N$. This  implies that the solution of the original semi-definite problem in Proposition 2
satisfies
$x_N \leq  e_{\rm min}(H_N ) \leq x_N + O(1)$. At the same time, it is clear that by enlarging
the set ${\cal M}$, the actual ground state energy $e_{\rm min}(H_N ) $ 
can be arbitrarily well
approximated from below.
\end{proof}

\section{A hierarchy of improved Anderson bounds}
The Anderson bound as is provides strikingly
good lower bounds of the energy density up to small constant errors at very little effort. For this
reason, the question arises whether it can be systematically improved. 
Resorting to the \emph{quantum marginal problem}, one can\new{,} in fact\new{,} 
\new{improve} the Anderson bound. For this, consider again 
\new{a} configuration \new{similar to the one} used in Proposition 1, described in Eq.~(\ref{parti}), \new{and in fact employ an 
Anderson bound for twice the patch size. This amounts to the} 
%
solution to 
the convex
optimization problem
\begin{eqnarray}
\label{fu3}
	\text{minimize} & \text{tr} (\omega h_{2m})\label{fu1}\\
	{\text{subject to }}&
\omega\geq 0,\, \text{tr}(\omega)=1
\label{fu4}
\end{eqnarray}
over \new{quantum} states defined on \new{$(\mathbb{C}^d)^{\otimes {2m}}$. For any such bound, the computational effort is exponential in $m$, so while the ground state of $h_m$ may be within reach, that of 
$h_{2m}$ may not be. 
Inspired by the \emph{quantum marginal problem} \cite{QuantumMarginalProblem,ChristandlMarginal}, this problem can, however, be relaxed in the following way, to a} family of efficiently solvable semi-definite problems (see also Fig.~1(c), \new{where the overlapping sites are referred to as ${\cal B}$}) that strictly 
generalize the
Anderson bound.  

\begin{proposition}[Improved Anderson bounds] \new{For a one-dimensional translationally invariant Hamiltonian $H_N$ and
an integer $s$ with $s\leq m$, the convex relaxation
of the optimization problem in Eqs.\
(\ref{fu3},\ref{fu4}) can for 
$s=1,\dots, m$ 
be relaxed to the semi-definite optimization 
problem
\begin{eqnarray}
\text{minimize} & 2 \text{tr}(\omega h_m) +
 \text{tr}((\ii\otimes h \otimes \ii)\sigma),\\
 & \sigma|_{\{1,\dots, s\}} 
 = \omega|_{\{
 m-s+1,\dots, m
 \} },\\
 & \sigma|_{\{2+1,\dots, 2s\} } = \omega|_{\{
 1,\dots, s
 \} },\\
& \sigma\geq 0, \omega\geq 0,
 \text{tr}(\sigma)= \text{tr}(\omega)=1,
\end{eqnarray}
the optimal value
$x_{m,s}$ of which
satisfies
\begin{equation}
\frac{x_{m,s}}{(2m-1)} 
\leq \frac{x_{m,m}}{(2m-1)} \leq e_\text{min}.
\end{equation}
}
\end{proposition}


%
\begin{proof}
\new{This can be easily seen as a relaxation of the
full problem for $s=m$: Then the minimization 
delivers exactly the smallest eigenvalue 
$\lambda_{\text{min}}(h_{2m})$ of $h_{2m}$, so that
\begin{equation}
\frac{x_{m,m}}{(2m-1)} = 
\frac{\lambda_\text{min}(h_{2m})}{(2m-1)}\leq e_\text{min}
\end{equation}
delivers precisely the Anderson bound for $2m$ sites.
For smaller values $s=1,\dots, m-1$, 
one merely requires $\omega$
and $\sigma$ to be identical on a subset of
sites, hence relaxing the problem, leading to smaller and less tight lower bounds,
\begin{equation}
\frac{x_{m,s}}{(2m-1)} \leq \frac{x_{m,t}}{(2m-1)}
\leq \frac{x_{m,m}}{(2m-1)}\leq  e_\text{min}
\end{equation}
for $t=s,\dots, m$.
} 

\end{proof}
\new{Hence, making use of the
quantum marginal problem,
one arrives at a hierarchy of new bounds.}


\section{Summary and outlook}
This \new{work} emphasizes that one can easily equip upper bounds to ground states obtained
by resorting to classical or quantum variational principles with concomitant lower bounds: These bounds
certify the quality of the variational ansatz. As technical results, performance guarantees of Anderson bounds
and of semi-definite relaxations are proven. What is more, an improved Anderson bound is presented. 
The upshot is that certified bounds to the energy density up to small constant $O(1)$ in the system size
-- for the Anderson bound even quantitative ones -- are easy to get classically.

\new{All lower bounds are comparably simple.}
One may argue, however, that it is their
simplicity that renders them useful: Again, they can be interpreted as ``de-quantization statements''.
It is sometimes under-appreciated that one can easily classically
approximate ground state energy densities up to a small constant error from below: 
This places stringent demands on quantum simulations aimed at producing such 
approximations. Any quantum simulation aimed at 
approximating ground state energies hence has to deliver approximations that scale more
favourable compared to this in order to possibly outperform classical computations. 
Ref.~\cite{Sev2021} makes a similar point for systems of quantum chemistry, 
 stressing that estimating the ground state energy of a local Hamiltonian when given, as an additional input, 
 a state sufficiently close to the ground state, can be solved efficiently with constant precision on a classical computer. \new{This work also provides a ``proof pocket'',
 providing variational quantum eigensolver with rigorous performance guarantees \cite{MindTheGaps}, and hence contributes to making near-term quantum computing more reliable and quantitative.}
 
The results stated are clearly not in contradiction with the famous
\emph{quantum PCP} conjecture \cite{QuantumPCP}. After all, this conjecture states that it
remains {\tt QMA}-hard to approximate the  
ground state energy even up to an error $\gamma n$ for some absolute constant
$0 < \gamma < 1$, where $n$ is the number of local terms in the Hamiltonian. Obviously, 
 the above bounds produce exactly such an energy approximation, which only once again implies
 that the statement of the quantum PCP conjecture cannot expected to be tight for cubic lattices.
It is also worth noting that  all the mentioned bounds apply equally well to 
\emph{fermionic Hamiltonians} \cite{HastingsFermionic,HelsenFermionic} 
which have again moved to the focus of attention recently,
not the least as the precise scaling of the ground state energy of the 
\emph{Sachdev-Ye-Kitaev} (SYK) model 
\cite{PhysRevLett.70.3339,SYK, SYK2} 
of random degree polynomials has become interesting. It might also be fruitful to compare
the discussed bounds with improvements of Temple's lower bound \cite{Martinazzo}.
It is the hope that this \new{work} can 
contribute to the development of
benchmarks for variational principles in both the classical and quantum reading. 


\subsection*{Acknowledgements}
Discussions with J.~Haferkamp as 
well as comments by D.~Miller and J.~M.~Arrazola
are acknowledged. This work has been funded by the DFG (CRC 183, SPP 2514), 
the \new{BMFTR} 
(MuniQCAtoms, FermiQP, Hybrid, \new{QuSol}, \new{Hybrid++}), the BMWK (EniQmA), the Einstein Foundation, the Munich Quantum Valley (K8), the Quant\-ERA (HQCC), \new{and the European Research Council (DebuQC).}
After completion of this work, the author of this note
became aware of an exciting related but different relaxation of the 
infinite translationally invariant ground state problem
\cite{SchuchRelaxation} compared to what is stated as Proposition 3. 
It would be interesting to compare the two
relaxations, \new{which is left to future work.} 

\bibliographystyle{apsrev4-1}

\begin{thebibliography}{32}%
\makeatletter
\providecommand \@ifxundefined [1]{%
 \@ifx{#1\undefined}
}%
\providecommand \@ifnum [1]{%
 \ifnum #1\expandafter \@firstoftwo
 \else \expandafter \@secondoftwo
 \fi
}%
\providecommand \@ifx [1]{%
 \ifx #1\expandafter \@firstoftwo
 \else \expandafter \@secondoftwo
 \fi
}%
\providecommand \natexlab [1]{#1}%
\providecommand \enquote  [1]{``#1''}%
\providecommand \bibnamefont  [1]{#1}%
\providecommand \bibfnamefont [1]{#1}%
\providecommand \citenamefont [1]{#1}%
\providecommand \href@noop [0]{\@secondoftwo}%
\providecommand \href [0]{\begingroup \@sanitize@url \@href}%
\providecommand \@href[1]{\@@startlink{#1}\@@href}%
\providecommand \@@href[1]{\endgroup#1\@@endlink}%
\providecommand \@sanitize@url [0]{\catcode `\\12\catcode `\$12\catcode
  `\&12\catcode `\#12\catcode `\^12\catcode `\_12\catcode `\%12\relax}%
\providecommand \@@startlink[1]{}%
\providecommand \@@endlink[0]{}%
\providecommand \url  [0]{\begingroup\@sanitize@url \@url }%
\providecommand \@url [1]{\endgroup\@href {#1}{\urlprefix }}%
\providecommand \urlprefix  [0]{URL }%
\providecommand \Eprint [0]{\href }%
\providecommand \doibase [0]{http://dx.doi.org/}%
\providecommand \selectlanguage [0]{\@gobble}%
\providecommand \bibinfo  [0]{\@secondoftwo}%
\providecommand \bibfield  [0]{\@secondoftwo}%
\providecommand \translation [1]{[#1]}%
\providecommand \BibitemOpen [0]{}%
\providecommand \bibitemStop [0]{}%
\providecommand \bibitemNoStop [0]{.\EOS\space}%
\providecommand \EOS [0]{\spacefactor3000\relax}%
\providecommand \BibitemShut  [1]{\csname bibitem#1\endcsname}%
\let\auto@bib@innerbib\@empty
\bibitem [{\citenamefont {Sholl}\ and\ \citenamefont {Steckel}(2009)}]{DFT3}%
  \BibitemOpen
  \bibfield  {author} {\bibinfo {author} {\bibfnamefont {D.~S.}\ \bibnamefont
  {Sholl}}\ and\ \bibinfo {author} {\bibfnamefont {J.~A.}\ \bibnamefont
  {Steckel}},\ }\href {\doibase 10.1002/9780470447710} {\emph {\bibinfo {title}
  {Density functional theory: A practical introduction}}}\ (\bibinfo
  {publisher} {John Wiley and Sons, Inc.},\ \bibinfo {address} {Hoboken},\
  \bibinfo {year} {2009})\BibitemShut {NoStop}%
\bibitem [{\citenamefont {Gubernatis}\ \emph {et~al.}(2016)\citenamefont
  {Gubernatis}, \citenamefont {Kawashima},\ and\ \citenamefont
  {Werner}}]{QuantumMonteCarloMethods}%
  \BibitemOpen
  \bibfield  {author} {\bibinfo {author} {\bibfnamefont {J.}~\bibnamefont
  {Gubernatis}}, \bibinfo {author} {\bibfnamefont {N.}~\bibnamefont
  {Kawashima}}, \ and\ \bibinfo {author} {\bibfnamefont {P.}~\bibnamefont
  {Werner}},\ }\href {\doibase 10.1017/CBO9780511902581} {\emph {\bibinfo
  {title} {{Quantum Monte Carlo methods: Algorithms for lattice models}}}}\
  (\bibinfo  {publisher} {Cambridge University Press},\ \bibinfo {address}
  {Cambridge},\ \bibinfo {year} {2016})\BibitemShut {NoStop}%
\bibitem [{\citenamefont {Or\'us}(2014)}]{Orus-AnnPhys-2014}%
  \BibitemOpen
  \bibfield  {author} {\bibinfo {author} {\bibfnamefont {R.}~\bibnamefont
  {Or\'us}},\ }\href {\doibase 10.1016/j.aop.2014.06.013} {\bibfield  {journal}
  {\bibinfo  {journal} {Ann. Phys.}\ }\textbf {\bibinfo {volume} {349}},\
  \bibinfo {pages} {117} (\bibinfo {year} {2014})}\BibitemShut {NoStop}%
\bibitem [{\citenamefont {Cirac}\ \emph {et~al.}(2021)\citenamefont {Cirac},
  \citenamefont {P\'erez-Garc\'{\i}a}, \citenamefont {Schuch},\ and\
  \citenamefont {Verstraete}}]{RevModPhys.93.045003}%
  \BibitemOpen
  \bibfield  {author} {\bibinfo {author} {\bibfnamefont {J.~I.}\ \bibnamefont
  {Cirac}}, \bibinfo {author} {\bibfnamefont {D.}~\bibnamefont
  {P\'erez-Garc\'{\i}a}}, \bibinfo {author} {\bibfnamefont {N.}~\bibnamefont
  {Schuch}}, \ and\ \bibinfo {author} {\bibfnamefont {F.}~\bibnamefont
  {Verstraete}},\ }\href {\doibase 10.1103/RevModPhys.93.045003} {\bibfield
  {journal} {\bibinfo  {journal} {Rev. Mod. Phys.}\ }\textbf {\bibinfo {volume}
  {93}},\ \bibinfo {pages} {045003} (\bibinfo {year} {2021})}\BibitemShut
  {NoStop}%
\bibitem [{\citenamefont {Schollw\"ock}(2011)}]{MPSRev}%
  \BibitemOpen
  \bibfield  {author} {\bibinfo {author} {\bibfnamefont {U.}~\bibnamefont
  {Schollw\"ock}},\ }\href {\doibase 10.1016/j.aop.2010.09.012} {\bibfield
  {journal} {\bibinfo  {journal} {Ann. Phys.}\ }\textbf {\bibinfo {volume}
  {326}},\ \bibinfo {pages} {96} (\bibinfo {year} {2011})}\BibitemShut
  {NoStop}%
\bibitem [{\citenamefont {Eisert}\ \emph {et~al.}(2010)\citenamefont {Eisert},
  \citenamefont {Cramer},\ and\ \citenamefont {Plenio}}]{AreaReview}%
  \BibitemOpen
  \bibfield  {author} {\bibinfo {author} {\bibfnamefont {J.}~\bibnamefont
  {Eisert}}, \bibinfo {author} {\bibfnamefont {M.}~\bibnamefont {Cramer}}, \
  and\ \bibinfo {author} {\bibfnamefont {M.~B.}\ \bibnamefont {Plenio}},\
  }\href {\doibase 10.1103/RevModPhys.82.277} {\bibfield  {journal} {\bibinfo
  {journal} {Rev. Mod. Phys.}\ }\textbf {\bibinfo {volume} {82}},\ \bibinfo
  {pages} {277} (\bibinfo {year} {2010})}\BibitemShut {NoStop}%
\bibitem [{\citenamefont {McClean}\ \emph {et~al.}(2016)\citenamefont
  {McClean}, \citenamefont {Romero}, \citenamefont {Babbush},\ and\
  \citenamefont {Aspuru-Guzik}}]{McClean_2016}%
  \BibitemOpen
  \bibfield  {author} {\bibinfo {author} {\bibfnamefont {J.~R.}\ \bibnamefont
  {McClean}}, \bibinfo {author} {\bibfnamefont {J.}~\bibnamefont {Romero}},
  \bibinfo {author} {\bibfnamefont {R.}~\bibnamefont {Babbush}}, \ and\
  \bibinfo {author} {\bibfnamefont {A.}~\bibnamefont {Aspuru-Guzik}},\ }\href
  {\doibase 10.1088/1367-2630/18/2/023023} {\bibfield  {journal} {\bibinfo
  {journal} {New J. Phys.}\ }\textbf {\bibinfo {volume} {18}},\ \bibinfo
  {pages} {023023} (\bibinfo {year} {2016})}\BibitemShut {NoStop}%
\bibitem [{\citenamefont {Bharti}\ \emph {et~al.}(2022)\citenamefont {Bharti},
  \citenamefont {Cervera-Lierta}, \citenamefont {Kyaw}, \citenamefont {Haug},
  \citenamefont {Alperin-Lea}, \citenamefont {Anand}, \citenamefont {Degroote},
  \citenamefont {Heimonen}, \citenamefont {Kottmann}, \citenamefont {Menke}
  \emph {et~al.}}]{bharti_2021_noisy}%
  \BibitemOpen
  \bibfield  {author} {\bibinfo {author} {\bibfnamefont {K.}~\bibnamefont
  {Bharti}}, \bibinfo {author} {\bibfnamefont {A.}~\bibnamefont
  {Cervera-Lierta}}, \bibinfo {author} {\bibfnamefont {T.~H.}\ \bibnamefont
  {Kyaw}}, \bibinfo {author} {\bibfnamefont {T.}~\bibnamefont {Haug}}, \bibinfo
  {author} {\bibfnamefont {S.}~\bibnamefont {Alperin-Lea}}, \bibinfo {author}
  {\bibfnamefont {A.}~\bibnamefont {Anand}}, \bibinfo {author} {\bibfnamefont
  {M.}~\bibnamefont {Degroote}}, \bibinfo {author} {\bibfnamefont
  {H.}~\bibnamefont {Heimonen}}, \bibinfo {author} {\bibfnamefont {J.~S.}\
  \bibnamefont {Kottmann}}, \bibinfo {author} {\bibfnamefont {T.}~\bibnamefont
  {Menke}},  \emph {et~al.},\ }\href {\doibase 10.48550/arXiv.2101.08448}
  {\bibfield  {journal} {\bibinfo  {journal} {Rev. Mod. Phys.}\ }\textbf
  {\bibinfo {volume} {94}},\ \bibinfo {pages} {015004} (\bibinfo {year}
  {2022})}\BibitemShut {NoStop}%
\bibitem [{\citenamefont {Sweke}\ \emph {et~al.}(2020)\citenamefont {Sweke},
  \citenamefont {Wilde}, \citenamefont {Meyer}, \citenamefont {Schuld},
  \citenamefont {F{\"a}hrmann}, \citenamefont {Meynard-Piganeau},\ and\
  \citenamefont {Eisert}}]{Gradients}%
  \BibitemOpen
  \bibfield  {author} {\bibinfo {author} {\bibfnamefont {R.}~\bibnamefont
  {Sweke}}, \bibinfo {author} {\bibfnamefont {F.}~\bibnamefont {Wilde}},
  \bibinfo {author} {\bibfnamefont {J.}~\bibnamefont {Meyer}}, \bibinfo
  {author} {\bibfnamefont {M.}~\bibnamefont {Schuld}}, \bibinfo {author}
  {\bibfnamefont {P.~K.}\ \bibnamefont {F{\"a}hrmann}}, \bibinfo {author}
  {\bibfnamefont {B.}~\bibnamefont {Meynard-Piganeau}}, \ and\ \bibinfo
  {author} {\bibfnamefont {J.}~\bibnamefont {Eisert}},\ }\href {\doibase
  10.22331/q-2020-08-31-314} {\bibfield  {journal} {\bibinfo  {journal}
  {Quantum}\ }\textbf {\bibinfo {volume} {4}},\ \bibinfo {pages} {314}
  (\bibinfo {year} {2020})}\BibitemShut {NoStop}%
\bibitem [{\citenamefont {McClean}\ \emph {et~al.}(2018)\citenamefont
  {McClean}, \citenamefont {Boixo}, \citenamefont {Smelyanskiy}, \citenamefont
  {Babbush},\ and\ \citenamefont {Neven}}]{BarrenPlateaus}%
  \BibitemOpen
  \bibfield  {author} {\bibinfo {author} {\bibfnamefont {J.~R.}\ \bibnamefont
  {McClean}}, \bibinfo {author} {\bibfnamefont {S.}~\bibnamefont {Boixo}},
  \bibinfo {author} {\bibfnamefont {V.~N.}\ \bibnamefont {Smelyanskiy}},
  \bibinfo {author} {\bibfnamefont {R.}~\bibnamefont {Babbush}}, \ and\
  \bibinfo {author} {\bibfnamefont {H.}~\bibnamefont {Neven}},\ }\href
  {\doibase 10.1038/s41467-018-07090-4} {\bibfield  {journal} {\bibinfo
  {journal} {Nature Comm.}\ }\textbf {\bibinfo {volume} {9}},\ \bibinfo {pages}
  {4812} (\bibinfo {year} {2018})}\BibitemShut {NoStop}%
\bibitem [{\citenamefont {Larocca}\ \emph {et~al.}(2024)\citenamefont
  {Larocca}, \citenamefont {Thanasilp}, \citenamefont {Wang}, \citenamefont
  {Sharma}, \citenamefont {Biamonte}, \citenamefont {Coles}, \citenamefont
  {Cincio}, \citenamefont {McClean}, \citenamefont {Holmes},\ and\
  \citenamefont {Cerezo}}]{ReviewBarrenPlateaus}%
  \BibitemOpen
  \bibfield  {author} {\bibinfo {author} {\bibfnamefont {M.}~\bibnamefont
  {Larocca}}, \bibinfo {author} {\bibfnamefont {S.}~\bibnamefont {Thanasilp}},
  \bibinfo {author} {\bibfnamefont {S.}~\bibnamefont {Wang}}, \bibinfo {author}
  {\bibfnamefont {K.}~\bibnamefont {Sharma}}, \bibinfo {author} {\bibfnamefont
  {J.}~\bibnamefont {Biamonte}}, \bibinfo {author} {\bibfnamefont {P.~J.}\
  \bibnamefont {Coles}}, \bibinfo {author} {\bibfnamefont {L.}~\bibnamefont
  {Cincio}}, \bibinfo {author} {\bibfnamefont {J.~R.}\ \bibnamefont {McClean}},
  \bibinfo {author} {\bibfnamefont {Z.}~\bibnamefont {Holmes}}, \ and\ \bibinfo
  {author} {\bibfnamefont {M.}~\bibnamefont {Cerezo}},\ }\href@noop {} {\
  (\bibinfo {year} {2024})},\ \Eprint {http://arxiv.org/abs/2405.00781}
  {arXiv:2405.00781} \BibitemShut {NoStop}%
\bibitem [{\citenamefont {Gil-Fuster}\ \emph {et~al.}(2024)\citenamefont
  {Gil-Fuster}, \citenamefont {Gyurik}, \citenamefont {Perez-Salinas},\ and\
  \citenamefont {Dunjko}}]{TrainabilityDequantization}%
  \BibitemOpen
  \bibfield  {author} {\bibinfo {author} {\bibfnamefont {E.}~\bibnamefont
  {Gil-Fuster}}, \bibinfo {author} {\bibfnamefont {C.}~\bibnamefont {Gyurik}},
  \bibinfo {author} {\bibfnamefont {A.}~\bibnamefont {Perez-Salinas}}, \ and\
  \bibinfo {author} {\bibfnamefont {V.}~\bibnamefont {Dunjko}},\ }\href@noop {}
  {\  (\bibinfo {year} {2024})},\ \Eprint {http://arxiv.org/abs/2406.07072}
  {arXiv:2406.07072} \BibitemShut {NoStop}%
\bibitem [{\citenamefont {Cerezo}\ \emph {et~al.}(2025)\citenamefont {Cerezo},
  \citenamefont {Larocca}, \citenamefont {García-Martín}, \citenamefont
  {Diaz}, \citenamefont {Braccia}, \citenamefont {Fontana}, \citenamefont
  {Rudolph}, \citenamefont {Bermejo}, \citenamefont {Ijaz}, \citenamefont
  {Thanasilp}, \citenamefont {Anschuetz},\ and\ \citenamefont
  {Holmes}}]{DoesBarrenSimulability}%
  \BibitemOpen
  \bibfield  {author} {\bibinfo {author} {\bibfnamefont {M.}~\bibnamefont
  {Cerezo}}, \bibinfo {author} {\bibfnamefont {M.}~\bibnamefont {Larocca}},
  \bibinfo {author} {\bibfnamefont {D.}~\bibnamefont {García-Martín}},
  \bibinfo {author} {\bibfnamefont {N.~L.}\ \bibnamefont {Diaz}}, \bibinfo
  {author} {\bibfnamefont {P.}~\bibnamefont {Braccia}}, \bibinfo {author}
  {\bibfnamefont {E.}~\bibnamefont {Fontana}}, \bibinfo {author} {\bibfnamefont
  {M.~S.}\ \bibnamefont {Rudolph}}, \bibinfo {author} {\bibfnamefont
  {P.}~\bibnamefont {Bermejo}}, \bibinfo {author} {\bibfnamefont
  {A.}~\bibnamefont {Ijaz}}, \bibinfo {author} {\bibfnamefont {S.}~\bibnamefont
  {Thanasilp}}, \bibinfo {author} {\bibfnamefont {E.~R.}\ \bibnamefont
  {Anschuetz}}, \ and\ \bibinfo {author} {\bibfnamefont {Z.}~\bibnamefont
  {Holmes}},\ }\href {\doibase 10.1038/s41467-025-63099-6} {\bibfield
  {journal} {\bibinfo  {journal} {Nature Comm.}\ }\textbf {\bibinfo {volume}
  {16}},\ \bibinfo {pages} {7907} (\bibinfo {year} {2025})}\BibitemShut
  {NoStop}%
\bibitem [{\citenamefont {Anderson}(1951)}]{PhysRev.83.1260}%
  \BibitemOpen
  \bibfield  {author} {\bibinfo {author} {\bibfnamefont {P.~W.}\ \bibnamefont
  {Anderson}},\ }\href {\doibase 10.1103/PhysRev.83.1260} {\bibfield  {journal}
  {\bibinfo  {journal} {Phys. Rev.}\ }\textbf {\bibinfo {volume} {83}},\
  \bibinfo {pages} {1260} (\bibinfo {year} {1951})}\BibitemShut {NoStop}%
\bibitem [{\citenamefont {Mazziotti}(2006)}]{Mazziotti}%
  \BibitemOpen
  \bibfield  {author} {\bibinfo {author} {\bibfnamefont {D.~A.}\ \bibnamefont
  {Mazziotti}},\ }\href {\doibase 10.1021/ar050029d} {\bibfield  {journal}
  {\bibinfo  {journal} {Acc. Chem. Res.}\ }\textbf {\bibinfo {volume} {39}},\
  \bibinfo {pages} {207} (\bibinfo {year} {2006})}\BibitemShut {NoStop}%
\bibitem [{\citenamefont {Barthel}\ and\ \citenamefont
  {H\"ubener}(2012)}]{PhysRevLett.108.200404}%
  \BibitemOpen
  \bibfield  {author} {\bibinfo {author} {\bibfnamefont {T.}~\bibnamefont
  {Barthel}}\ and\ \bibinfo {author} {\bibfnamefont {R.}~\bibnamefont
  {H\"ubener}},\ }\href {\doibase 10.1103/PhysRevLett.108.200404} {\bibfield
  {journal} {\bibinfo  {journal} {Phys. Rev. Lett.}\ }\textbf {\bibinfo
  {volume} {108}},\ \bibinfo {pages} {200404} (\bibinfo {year}
  {2012})}\BibitemShut {NoStop}%
\bibitem [{\citenamefont {Baumgratz}\ and\ \citenamefont
  {Plenio}(2012)}]{BaumgratzLowerBounds}%
  \BibitemOpen
  \bibfield  {author} {\bibinfo {author} {\bibfnamefont {T.}~\bibnamefont
  {Baumgratz}}\ and\ \bibinfo {author} {\bibfnamefont {M.~B.}\ \bibnamefont
  {Plenio}},\ }\href {\doibase 10.48550/arXiv.1106.5275} {\bibfield  {journal}
  {\bibinfo  {journal} {New J. Phys.}\ }\textbf {\bibinfo {volume} {14}},\
  \bibinfo {pages} {023027} (\bibinfo {year} {2012})}\BibitemShut {NoStop}%
\bibitem [{\citenamefont {Navascues}\ \emph {et~al.}(2008)\citenamefont
  {Navascues}, \citenamefont {Pironio},\ and\ \citenamefont
  {Acin}}]{PironioLowerBounds}%
  \BibitemOpen
  \bibfield  {author} {\bibinfo {author} {\bibfnamefont {M.}~\bibnamefont
  {Navascues}}, \bibinfo {author} {\bibfnamefont {S.}~\bibnamefont {Pironio}},
  \ and\ \bibinfo {author} {\bibfnamefont {A.}~\bibnamefont {Acin}},\ }\href
  {\doibase 10.1088/1367-2630/10/7/073013} {\bibfield  {journal} {\bibinfo
  {journal} {New J. Phys.}\ }\textbf {\bibinfo {volume} {10}},\ \bibinfo
  {pages} {073013} (\bibinfo {year} {2008})}\BibitemShut {NoStop}%
\bibitem [{\citenamefont {Baccari}\ \emph {et~al.}(2020)\citenamefont
  {Baccari}, \citenamefont {Gogolin}, \citenamefont {Wittek},\ and\
  \citenamefont {Ac\'{\i}n}}]{PhysRevResearch.2.043163}%
  \BibitemOpen
  \bibfield  {author} {\bibinfo {author} {\bibfnamefont {F.}~\bibnamefont
  {Baccari}}, \bibinfo {author} {\bibfnamefont {C.}~\bibnamefont {Gogolin}},
  \bibinfo {author} {\bibfnamefont {P.}~\bibnamefont {Wittek}}, \ and\ \bibinfo
  {author} {\bibfnamefont {A.}~\bibnamefont {Ac\'{\i}n}},\ }\href {\doibase
  10.1103/PhysRevResearch.2.043163} {\bibfield  {journal} {\bibinfo  {journal}
  {Phys. Rev. Research}\ }\textbf {\bibinfo {volume} {2}},\ \bibinfo {pages}
  {043163} (\bibinfo {year} {2020})}\BibitemShut {NoStop}%
\bibitem [{\citenamefont {Boyd}\ and\ \citenamefont
  {Vanderberghe}(2004)}]{Boyd2004}%
  \BibitemOpen
  \bibfield  {author} {\bibinfo {author} {\bibfnamefont {S.}~\bibnamefont
  {Boyd}}\ and\ \bibinfo {author} {\bibfnamefont {L.}~\bibnamefont
  {Vanderberghe}},\ }\href {\doibase 10.1017/CBO9780511804441} {\emph {\bibinfo
  {title} {Convex optimization}}}\ (\bibinfo  {publisher} {Cambridge University
  Press},\ \bibinfo {address} {Cambridge},\ \bibinfo {year} {2004})\BibitemShut
  {NoStop}%
\bibitem [{\citenamefont {Schilling}(2014)}]{QuantumMarginalProblem}%
  \BibitemOpen
  \bibfield  {author} {\bibinfo {author} {\bibfnamefont {C.}~\bibnamefont
  {Schilling}},\ }\href {\doibase 10.1142/9789814618144_0010Cit} {\bibfield
  {journal} {\bibinfo  {journal} {Math. Res. Quant. Mech.}\ ,\ \bibinfo {pages}
  {165}} (\bibinfo {year} {2014})}\BibitemShut {NoStop}%
\bibitem [{\citenamefont {Christandl}\ and\ \citenamefont
  {Mitchison}(2006)}]{ChristandlMarginal}%
  \BibitemOpen
  \bibfield  {author} {\bibinfo {author} {\bibfnamefont {M.}~\bibnamefont
  {Christandl}}\ and\ \bibinfo {author} {\bibfnamefont {G.}~\bibnamefont
  {Mitchison}},\ }\href {\doibase 10.1007/s00220-005-1435-1} {\bibfield
  {journal} {\bibinfo  {journal} {Commun. Math. Phys.}\ }\textbf {\bibinfo
  {volume} {261}},\ \bibinfo {pages} {789} (\bibinfo {year}
  {2006})}\BibitemShut {NoStop}%
\bibitem [{\citenamefont {Gharibian}\ and\ \citenamefont
  {Gall}(2022)}]{Sev2021}%
  \BibitemOpen
  \bibfield  {author} {\bibinfo {author} {\bibfnamefont {S.}~\bibnamefont
  {Gharibian}}\ and\ \bibinfo {author} {\bibfnamefont {F.~L.}\ \bibnamefont
  {Gall}},\ }\href {\doibase 10.1145/3519935.3519991} {\bibfield  {journal}
  {\bibinfo  {journal} {Proc. 54th ACM Symp. Th. Comp. (STOC 2022)}\ ,\
  \bibinfo {pages} {19}} (\bibinfo {year} {2022})}\BibitemShut {NoStop}%
\bibitem [{\citenamefont {Eisert}\ and\ \citenamefont
  {Preskill}(2025)}]{MindTheGaps}%
  \BibitemOpen
  \bibfield  {author} {\bibinfo {author} {\bibfnamefont {J.}~\bibnamefont
  {Eisert}}\ and\ \bibinfo {author} {\bibfnamefont {J.}~\bibnamefont
  {Preskill}},\ }\href@noop {} {\  (\bibinfo {year} {2025})},\ \Eprint
  {http://arxiv.org/abs/2510.19928} {arXiv:2510.19928} \BibitemShut {NoStop}%
\bibitem [{\citenamefont {Aharonov}\ \emph {et~al.}(2013)\citenamefont
  {Aharonov}, \citenamefont {Arad},\ and\ \citenamefont {Vidick}}]{QuantumPCP}%
  \BibitemOpen
  \bibfield  {author} {\bibinfo {author} {\bibfnamefont {D.}~\bibnamefont
  {Aharonov}}, \bibinfo {author} {\bibfnamefont {I.}~\bibnamefont {Arad}}, \
  and\ \bibinfo {author} {\bibfnamefont {T.}~\bibnamefont {Vidick}},\ }\href
  {\doibase 10.1145/2491533.2491549} {\bibfield  {journal} {\bibinfo  {journal}
  {ACM SIGACT}\ }\textbf {\bibinfo {volume} {44}},\ \bibinfo {pages} {47}
  (\bibinfo {year} {2013})}\BibitemShut {NoStop}%
\bibitem [{\citenamefont {Hastings}\ and\ \citenamefont
  {O'Donnell}(2022)}]{HastingsFermionic}%
  \BibitemOpen
  \bibfield  {author} {\bibinfo {author} {\bibfnamefont {M.~B.}\ \bibnamefont
  {Hastings}}\ and\ \bibinfo {author} {\bibfnamefont {R.}~\bibnamefont
  {O'Donnell}},\ }\href {\doibase 10.1145/3519935.3519960} {\bibfield
  {journal} {\bibinfo  {journal} {STOC 2022, Proc. 54th Ann. ACM SIGACT Symp.
  Th. Comp.}\ ,\ \bibinfo {pages} {776}} (\bibinfo {year} {2022})}\BibitemShut
  {NoStop}%
\bibitem [{\citenamefont {Herasymenko}\ \emph {et~al.}(2023)\citenamefont
  {Herasymenko}, \citenamefont {Stroeks}, \citenamefont {Helsen},\ and\
  \citenamefont {Terhal}}]{HelsenFermionic}%
  \BibitemOpen
  \bibfield  {author} {\bibinfo {author} {\bibfnamefont {Y.}~\bibnamefont
  {Herasymenko}}, \bibinfo {author} {\bibfnamefont {M.}~\bibnamefont
  {Stroeks}}, \bibinfo {author} {\bibfnamefont {J.}~\bibnamefont {Helsen}}, \
  and\ \bibinfo {author} {\bibfnamefont {B.}~\bibnamefont {Terhal}},\ }\href
  {\doibase 10.22331/q-2023-08-10-1081} {\bibfield  {journal} {\bibinfo
  {journal} {Quantum}\ }\textbf {\bibinfo {volume} {7}},\ \bibinfo {pages}
  {1081} (\bibinfo {year} {2023})}\BibitemShut {NoStop}%
\bibitem [{\citenamefont {Sachdev}\ and\ \citenamefont
  {Ye}(1993)}]{PhysRevLett.70.3339}%
  \BibitemOpen
  \bibfield  {author} {\bibinfo {author} {\bibfnamefont {S.}~\bibnamefont
  {Sachdev}}\ and\ \bibinfo {author} {\bibfnamefont {J.}~\bibnamefont {Ye}},\
  }\href {\doibase 10.1103/PhysRevLett.70.3339} {\bibfield  {journal} {\bibinfo
   {journal} {Phys. Rev. Lett.}\ }\textbf {\bibinfo {volume} {70}},\ \bibinfo
  {pages} {3339} (\bibinfo {year} {1993})}\BibitemShut {NoStop}%
\bibitem [{\citenamefont {Kitaev}(2015{\natexlab{a}})}]{SYK}%
  \BibitemOpen
  \bibfield  {author} {\bibinfo {author} {\bibfnamefont {A.}~\bibnamefont
  {Kitaev}},\ }\href {https://online.kitp.ucsb.edu/online/entangled15/kitaev}
  {\enquote {\bibinfo {title} {A simple model of quantum holography (part
  1)},}\ } (\bibinfo {year} {2015}{\natexlab{a}}),\ \bibinfo {note} {{KITP
  strings seminar, Entanglement 2015 program}}\BibitemShut {NoStop}%
\bibitem [{\citenamefont {Kitaev}(2015{\natexlab{b}})}]{SYK2}%
  \BibitemOpen
  \bibfield  {author} {\bibinfo {author} {\bibfnamefont {A.}~\bibnamefont
  {Kitaev}},\ }\href {https://online.kitp.ucsb.edu/online/entangled15/kitaev2}
  {\enquote {\bibinfo {title} {A simple model of quantum holography (part
  2)},}\ } (\bibinfo {year} {2015}{\natexlab{b}}),\ \bibinfo {note} {{KITP
  strings seminar, Entanglement 2015 program}}\BibitemShut {NoStop}%
\bibitem [{\citenamefont {Martinazzo}\ and\ \citenamefont
  {Pollak}(2020)}]{Martinazzo}%
  \BibitemOpen
  \bibfield  {author} {\bibinfo {author} {\bibfnamefont {R.}~\bibnamefont
  {Martinazzo}}\ and\ \bibinfo {author} {\bibfnamefont {E.}~\bibnamefont
  {Pollak}},\ }\href {\doibase 10.1073/pnas.2007093117} {\bibfield  {journal}
  {\bibinfo  {journal} {PNAS}\ }\textbf {\bibinfo {volume} {117}},\ \bibinfo
  {pages} {16181} (\bibinfo {year} {2020})}\BibitemShut {NoStop}%
\bibitem [{\citenamefont {Kull}\ \emph {et~al.}(2024)\citenamefont {Kull},
  \citenamefont {Schuch}, \citenamefont {Dive},\ and\ \citenamefont
  {Navascues}}]{SchuchRelaxation}%
  \BibitemOpen
  \bibfield  {author} {\bibinfo {author} {\bibfnamefont {I.}~\bibnamefont
  {Kull}}, \bibinfo {author} {\bibfnamefont {N.}~\bibnamefont {Schuch}},
  \bibinfo {author} {\bibfnamefont {B.}~\bibnamefont {Dive}}, \ and\ \bibinfo
  {author} {\bibfnamefont {M.}~\bibnamefont {Navascues}},\ }\href {\doibase
  10.1103/PhysRevX.14.021008} {\bibfield  {journal} {\bibinfo  {journal} {Phys.
  Rev. X}\ }\textbf {\bibinfo {volume} {14}},\ \bibinfo {pages} {021008}
  (\bibinfo {year} {2024})}\BibitemShut {NoStop}%
\end{thebibliography}

%

\end{document}